\newtheorem{lemma}{Lemma}
\newtheorem{theorem}{Theorem}
\newtheorem{corollary}{Corollary}
\theoremstyle{definition}
\newtheorem{assumption}{Assumption}
\DeclareMathOperator{\tr}{tr}
\DeclareMathOperator{\dist}{dist}
\begin{document}

\title{High-precision simulation of finite-size thermalizing systems at long times}

\author{Yichen Huang (黄溢辰)\thanks{yichenhuang@fas.harvard.edu}}
\affil{Department of Physics, Harvard University, Cambridge, Massachusetts 02138, USA}

\begin{CJK}{UTF8}{gbsn}

\maketitle

\end{CJK}

\begin{abstract}

To simulate thermalizing systems at long times, the most straightforward approach is to calculate the thermal properties at the corresponding energy. In a quantum many-body system of size $N$, for local observables and many initial states, this approach has an error of $O(1/N)$, which is reminiscent of the finite-size error of the equivalence of ensembles. In this paper, we propose a simple and efficient numerical method so that the simulation error is of higher order in $1/N$. This finite-size error scaling is proved by assuming the eigenstate thermalization hypothesis.

\end{abstract}

\section{Introduction}

Thermalization is one of the most remarkable phenomena in nature. Consider an isolated quantum many-body system governed by the Hamiltonian $H$ and initialized in a pure state $|\psi(0)\rangle$. Let $S$ be a small subsystem and $\bar S$ be its complement (rest of the system). Let
\begin{equation}
|\psi(t)\rangle=e^{-iHt}|\psi(0)\rangle,\quad\psi(t)_S:=\tr_{\bar S}|\psi(t)\rangle\langle\psi(t)|
\end{equation}
be the state and its reduced density matrix of $S$ at time $t\in\mathbb R$. Let
\begin{equation} \label{eq:th}
\rho=e^{-\beta H}/\tr(e^{-\beta H})
\end{equation}
be a thermal state at the same energy, i.e., the inverse temperature $\beta$ is determined from
\begin{equation} \label{eq:it}
E_\psi:=\langle \psi(0)|H|\psi(0)\rangle=\tr(\rho H).
\end{equation}
Thermalization means \cite{GE16} that at long times,
\begin{itemize}
\item $\psi(t)_S$ becomes almost independent of time, i.e., its temporal fluctuation vanishes in the thermodynamic limit. This is known as equilibration.
\item $\|\psi^\infty_S-\rho_S\|_1$ vanishes in the thermodynamic limit, where $\psi^\infty_S$ is the equilibrated $\psi(t)_S$, and $\rho_S:=\tr_{\bar S}\rho$ is the reduced density matrix of the thermal state (\ref{eq:th}).
\end{itemize}
Despite the existence of counterexamples, it is generally believed that thermalization usually occurs.

Our goal is to compute $\psi^\infty_S$ in thermalizing systems. To this end, the baseline is to compute $\rho_S$. For one-dimensional systems, this can be done very efficiently on a (classical) computer \cite{Ara69, KAA21, BCP22}. While it is (numerically) exact in the thermodynamic limit, this baseline has a finite-size error: In a system of size $N$, for many initial states (e.g., states with exponential decay of correlations),
\begin{equation} \label{eq:s}
\|\psi^\infty_S-\rho_S\|_1=O(1/N).
\end{equation}
Note that the prefactor hidden in the big-O notation is generically nonzero (although in fine-tuned cases, $\|\psi^\infty_S-\rho_S\|_1$ can be of higher order in $1/N$).

The scaling (\ref{eq:s}) can be understood as the finite-size error of the equivalence of ensembles (here we only explain the intuition; a quantitative argument is given later in this paper). Let $\rho^\textnormal{mc}$ be the uniform classical mixture of eigenstates in a small energy window such that
\begin{equation}
\tr(\rho^\textnormal{mc}H)=\tr(\rho H),
\end{equation}
i.e., $\rho^\textnormal{mc}$ represents the microcanonical ensemble at the same energy. Let $\rho^\textnormal{mc}_S:=\tr_{\bar S}\rho^\textnormal{mc}$. It is well known that
\begin{equation} \label{eq:cm}
\|\rho^\textnormal{mc}_S-\rho_S\|_1\sim1/N.
\end{equation}
Let $\{|j\rangle\}_{j=1,2,\ldots}$ be a complete set of eigenstates of $H$. We write the initial state as a superposition of eigenstates
\begin{equation}
    |\psi(0)\rangle=\sum_jc_j|j\rangle.
\end{equation}
Under mild assumptions (e.g., the spectrum of $H$ has non-degenerate gaps), equilibration and dephasing can be proved \cite{Rei08, LPSW09} so that $\psi^\infty_S=\tr_{\bar S}\psi^\infty$, where
\begin{equation} \label{eq:pp}
\psi^\infty:=\sum_jp'_j|j\rangle\langle j|,\quad p'_j:=|c_j|^2
\end{equation}
is the so-called diagonal ensemble \cite{RDO08}. While the microcanonical and diagonal ensembles are different, it is unsurprising that their finite-size error scaling with respect to the canonical ensemble is the same.

For a moderate system size, say, $N=50$ spins, Eq.~(\ref{eq:s}) predicts a finite-size error of $\simeq0.02$, which may not be small enough for practical purposes. Unfortunately, Eq.~(\ref{eq:s}), as a physical scaling relation, cannot be improved. In this paper, we computationally overcome the limitation suggested by Eq.~(\ref{eq:s}). We propose a simple and efficient numerical method that constructs an approximation to $\psi^\infty_S$ from $\rho$ (rather than directly outputs $\rho_S$) such that the approximation error is of higher order in $1/N$.

The eigenstate thermalization hypothesis (ETH) \cite{Deu91, Sre94, RDO08, DKPR16, Deu18} provides an explanation for the emergence of statistical mechanics from the unitary evolution of quantum systems. Although not all thermalizing systems satisfy the ETH \cite{HH22}, almost all of them do. The ETH says that the eigenstate expectation values are a smooth function of energy density. It implies the equivalence of ensembles with the finite-size error scaling (\ref{eq:s}), (\ref{eq:cm}) \cite{Sre99}. We show that in the thermodynamic limit, not only the scaling of $\|\psi^\infty_S-\rho_S\|_1$ but also the leading term of $\psi^\infty_S-\rho_S$ can be calculated using the ETH. Adding this term to $\rho_S$, the error is reduced to higher order in $1/N$.

\section{Results}

Consider a chain of $N$ spins so that the dimension of the Hilbert space is $d=d_\textnormal{loc}^N$, where $d_\textnormal{loc}$ (a constant) is the local dimension of each spin. The system is governed by a translation-invariant local Hamiltonian $H$ with periodic boundary conditions. (We impose translational invariance and periodic boundary conditions for simplicity. Neither is absolutely necessary for our argument below.) Let $\mathbb T$ be the (unitary) lattice translation operator, which acts on the computational basis states as
\begin{equation}
\mathbb T(|x_1\rangle\otimes|x_2\rangle\otimes\cdots\otimes|x_N\rangle)=|x_2\rangle\otimes|x_3\rangle\otimes\cdots\otimes|x_N\rangle\otimes|x_1\rangle
\end{equation}
with $x_l\in\{0,1,\ldots,d_\textnormal{loc}-1\}$ for $l=1,2,\ldots,N$. We write the Hamiltonian as
\begin{equation} \label{eq:H}
H=\sum_{l=0}^{N-1}H_l,\quad H_l=\mathbb T^{-l}h\mathbb T^l,
\end{equation}
where $h$ is a Hermitian operator acting on spins at positions $1,2,\ldots,k$ for some constant $k$. Assume without loss of generality that $\tr h=0$ (traceless) and $\|h\|=1$ (unit operator norm). Let $\{|j\rangle\}_{j=1}^d$ be a complete set of translation-invariant eigenstates of $H$ with corresponding energies $\{E_j\}$.

Suppose that the initial state $|\psi(0)\rangle$ has exponential decay of correlations. This includes all product states (each spin is disentangled from all other spins), whose correlation length is zero. Note that $|\psi(0)\rangle$ may not be translationally invariant.

If we measure the energy density of $|\psi(0)\rangle$, we obtain $E_j/N$ with probability $p'_j$. Exponential decay of correlations implies that this probability distribution is concentrated around the energy density $e_\psi:=E_\psi/N$ of $|\psi(0)\rangle$ \cite{Ans16}. Thus, only a neighborhood of $e_\psi$ is relevant. We assume the ETH in such a neighborhood. 

\begin{assumption} [eigenstate thermalization hypothesis] \label{ass}
Let $\epsilon$ be an arbitrarily small positive constant. For any local operator $A$ with $\|A\|=O(1)$, there is a sequence of functions $\{f_N:[e_\psi-\epsilon,e_\psi+\epsilon]\to\{z\in\mathbb C:|z|=O(1)\}\}$ (one for each system size $N$) such that
\begin{equation} \label{eq:ass}
\big|\langle j|A|j\rangle-f_N(E_j/N)\big|\le1/\poly(N)
\end{equation}
for all $j$ with $|E_j/N-e_\psi|\le\epsilon$, where $\poly(N)$ denotes a polynomial of sufficiently high degree in $N$. We assume that each $f_N(x)$ is smooth in the sense of having a Taylor expansion to some low order around $x=e_\psi$:
\begin{equation} \label{eq:t}
f_N(e_\psi+\delta)=f_N(e_\psi)+f'_N(e_\psi)\delta+f''_N(e_\psi)\delta^2/2+f'''_N(e_\psi)\delta^3/6+O(\delta^4),\quad\forall|\delta|\le\epsilon.
\end{equation}
\end{assumption}

In quantum chaotic systems, it was proposed analytically \cite{Sre99} and supported by numerical simulations \cite{KIH14} that the left-hand side of (\ref{eq:ass}) is exponentially small in $N$. For our purposes, however, a (much weaker) inverse polynomial upper bound suffices.

We are ready to present our main result. Suppose $E_\psi$ is not too close to the edges of the spectrum such that $\beta$, determined from Eq.~(\ref{eq:it}), is $O(1)$. For notational simplicity, let
\begin{equation}
G:=H-E_\psi,\quad g:=h-e_\psi.
\end{equation}

\begin{theorem} \label{thm}
For any local operator $A$ with $\|A\|=O(1)$, Assumption \ref{ass} implies that
\begin{multline} \label{eq:fd}
\tr(\psi^\infty A)=\tr(\rho A)+O(1/N^2)\\
+\frac1{2N\tr(\rho Gg)}\left(1-\frac{\langle\psi(0)|G^2|\psi(0)\rangle}{N\tr(\rho Gg)}\right)\left(\frac{\tr(\rho G^2g)\tr(\rho GA)}{\tr(\rho Gg)}+\tr\big(\rho G^2\big(\tr(\rho A)-A\big)\big)\right).
\end{multline}
\end{theorem}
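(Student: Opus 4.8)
The plan is to reduce everything to one-dimensional energy averages against a smooth function and then eliminate that function in favor of thermal correlators. Write $q_j:=e^{-\beta E_j}/\tr(e^{-\beta H})$ for the canonical weights and recall $p'_j=|c_j|^2$. By Assumption~\ref{ass}, both $\tr(\psi^\infty A)=\sum_j p'_j\langle j|A|j\rangle$ and $\tr(\rho A)=\sum_j q_j\langle j|A|j\rangle$ equal the corresponding averages of $f_N(E_j/N)$ up to an additive $1/\poly(N)$; since the initial state has exponential decay of correlations, the distribution $\{p'_j\}$ (and likewise $\{q_j\}$) is concentrated in the window $|E_j/N-e_\psi|\le\epsilon$ where Assumption~\ref{ass} applies, so the out-of-window tails are negligible to the order we need. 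First I would Taylor-expand $f_N$ about $e_\psi$ using (\ref{eq:t}), so that $\tr(\psi^\infty A)-\tr(\rho A)=\sum_{n}\frac{f_N^{(n)}(e_\psi)}{n!\,N^n}\big(\langle\psi(0)|G^n|\psi(0)\rangle-\tr(\rho G^n)\big)$. The $n=0$ and $n=1$ terms vanish because both ensembles share the mean energy $E_\psi$ (zeroth moments $1$, first moments $0$), and exponential decay of correlations gives the moment scalings $\langle\psi(0)|G^n|\psi(0)\rangle,\ \tr(\rho G^n)=O(N^{\lceil n/2\rceil})$. Hence only the $n=2$ term survives at order $1/N$, and I obtain $\tr(\psi^\infty A)-\tr(\rho A)=\frac{f_N''(e_\psi)}{2N^2}\big(\langle\psi(0)|G^2|\psi(0)\rangle-\tr(\rho G^2)\big)+O(1/N^2)$.

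It remains to express the unknown $f_N''(e_\psi)$ through thermal quantities. The idea is to run the same ETH-plus-Taylor expansion on three thermal correlators, $\tr(\rho A)$, $\tr(\rho GA)$, and $\tr(\rho G^2A)$, producing three relations (valid to leading order in $1/N$) among $f_N(e_\psi)$, $f_N'(e_\psi)$, $f_N''(e_\psi)$ and the thermal moments $\tr(\rho G^n)$. The crucial simplification comes from translation invariance of $\rho$: since $G=\sum_l\mathbb T^{-l}g\mathbb T^l$ and $\mathbb T$ commutes with both $\rho$ and $G$, cyclicity of the trace gives the identity $\tr(\rho G^{n+1})=N\tr(\rho G^ng)$ for every $n$, which converts the extensive moments $\tr(\rho G^2)$ and $\tr(\rho G^3)$ into the local, $O(1)$ correlators $\tr(\rho Gg)$ and $\tr(\rho G^2g)$ appearing in the statement. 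Solving the triangular system (eliminate $f_N(e_\psi)$ from the $\tr(\rho A)$ relation, read off $f_N'(e_\psi)=\tr(\rho GA)/\tr(\rho Gg)$ from the $\tr(\rho GA)$ relation) then yields $f_N''(e_\psi)=\frac{1}{\tr(\rho Gg)^2}\big(\tr(\rho G^2A)-\tr(\rho G^2)\tr(\rho A)-\tr(\rho G^2g)\tr(\rho GA)/\tr(\rho Gg)\big)+O(1/N)$, and since this multiplies a factor of size $O(N)$ only its leading value is needed.

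The main obstacle is bookkeeping the fourth moment. In the relation coming from $\tr(\rho G^2A)$, the term $\frac{f_N''(e_\psi)}{2N^2}\tr(\rho G^4)$ is of the same order $O(1)$ as the term $f_N(e_\psi)\tr(\rho G^2)$ after $f_N(e_\psi)$ is eliminated, so the two contributions to $f_N''(e_\psi)$ proportional to $\tr(\rho Gg)^2$ must be combined correctly. This requires the leading behavior $\tr(\rho G^4)=3\big(\tr(\rho G^2)\big)^2+O(N)$, i.e.\ that the fourth energy cumulant is sub-extensive, which follows from clustering (exponential decay) of thermal correlations at the $O(1)$ inverse temperature $\beta$; the resulting cancellation $-\tfrac12 f_N''v^2+\tfrac32 f_N''v^2=f_N''v^2$ (with $v:=\tr(\rho Gg)$) is what fixes the prefactor. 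Finally I would substitute $f_N''(e_\psi)$ and $\tr(\rho G^2)=N\tr(\rho Gg)$ into the $n=2$ expression and rewrite the moment difference as $\big(\langle\psi(0)|G^2|\psi(0)\rangle-\tr(\rho G^2)\big)/\tr(\rho G^2)=-(1-\langle\psi(0)|G^2|\psi(0)\rangle/(N\tr(\rho Gg)))$ to recover (\ref{eq:fd}) exactly; checking that every discarded contribution (the ETH error, the Taylor remainder $O(\langle\psi(0)|G^4|\psi(0)\rangle/N^4)$, the $n=3,4$ difference terms, and the $O(1/N)$ error in $f_N''$) is genuinely $O(1/N^2)$ completes the argument.
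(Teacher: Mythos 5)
Your proposal is correct and follows essentially the same route as the paper: ETH plus the Taylor expansion (\ref{eq:t}), concentration of both energy distributions to restrict to the window where Assumption \ref{ass} applies, translation invariance to convert $\tr(\rho G^{n+1})$ into $N\tr(\rho G^ng)$, and the linear system built from $\tr(\rho A)$, $\tr(\rho GA)$, $\tr(\rho G^2A)$ together with the fourth-moment identity $\tr(\rho G^4)=3\tr^2(\rho G^2)+O(N)$ to extract $f_N''(e_\psi)$ --- this is precisely the content of Lemmas \ref{l:moment} and \ref{l:8}, including the $-\tfrac12 f_N''v^2+\tfrac32 f_N''v^2$ cancellation you single out. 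The one genuine organizational difference is that you subtract $\tr(\rho A)$ from $\tr(\psi^\infty A)$ \emph{before} expanding, so the $n=0$ and $n=1$ terms cancel identically and you never need the $O(1/N^2)$-accurate expression for $f_N(e_\psi)$, i.e.\ the paper's Eq.~(\ref{eq:b1}); the paper instead derives (\ref{eq:b1}) and substitutes it into the expansion of $\tr(\psi^\infty A)$. Your version is marginally leaner for proving the theorem alone; the paper's detour through (\ref{eq:b1}) additionally yields Corollary \ref{eigs} and Eq.~(\ref{eq:eigd}) as byproducts.

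One slip you should fix: the centered moments scale as $O(N^{\lfloor n/2\rfloor})$, not $O(N^{\lceil n/2\rceil})$. As written, your bound gives $\langle\psi(0)|G^3|\psi(0)\rangle,\ \tr(\rho G^3)=O(N^2)$, in which case the $n=3$ difference term $\frac{f_N'''(e_\psi)}{6N^3}\big(\langle\psi(0)|G^3|\psi(0)\rangle-\tr(\rho G^3)\big)$ would be $O(1/N)$ --- the same order as the term you keep --- and your claim that only $n=2$ survives would not follow. The correct bound $O(N)$ for the third moments (the paper's Eq.~(\ref{eq:3}) and its initial-state analogue) does follow from the clustering you invoke, because each local term $G_l$ (respectively $G'_l$) has vanishing expectation in the relevant state, so only tuples in which every index lies within distance $O(\xi\log N)$ of some other index contribute appreciably, and there are only $N\cdot O(r+1)$ such triples at separation scale $r$. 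With $\lceil\cdot\rceil$ replaced by $\lfloor\cdot\rfloor$, all of your discarded contributions are indeed $O(1/N^2)$ and the argument closes.
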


The second line of Eq.~(\ref{eq:fd}) is $O(1/N)$ and can be calculated from the thermal properties of the system at energy density $e_\beta$.

In the language of reduced density matrices, Eq.~(\ref{eq:fd}) can be stated as\footnote{I thank Soonwon Choi for pointing out that Eq.~(\ref{eq:fd}) can be stated as Eq.~(\ref{eq:cor}).}
\begin{multline} \label{eq:cor}
\bigg\|\frac1{2N\tr(\rho Gg)}\left(1-\frac{\langle\psi(0)|G^2|\psi(0)\rangle}{N\tr(\rho Gg)}\right)\left(\frac{\tr(\rho G^2g)\tr_{\bar S}(\rho G)}{\tr(\rho Gg)}+\tr(\rho G^2)\rho_S-\tr_{\bar S}(\rho G^2)\right)\\
+\rho_S-\psi_S^\infty\bigg\|_1=O(1/N^2).
\end{multline}
Thus, we obtain an approximation to $\psi_S^\infty$ such that the finite-size error is $O(1/N^2)$.

\section{Proofs}

If we measure the energy density of either $\psi^\infty$ or $\rho$, we obtain a probability distribution concentrated around $e_\psi$. Thus, it suffices to study $f_N(e_\psi),f'_N(e_\psi),f''_N(e_\psi),\ldots$, which characterize $f(x)$ near $x=e_\psi$. Our proof consists of two steps. First, Lemma \ref{l:8} expresses $f_N(e_\psi),f'_N(e_\psi),f''_N(e_\psi)$ with $\rho$, whose properties can be computed efficiently. Then, we use this information to calculate $\tr(\psi^\infty A)$. Both steps make heavy use of the Taylor expansion (\ref{eq:t}). Similar methods were used in Refs.~\cite{Sre96, Sre99, DKPR16, Hua22AP, Hua22ATMP}. Our main technical contribution is a rigorous calculation of the finite-size error, especially in the finite-temperature case. Both our intermediate and final results (Theorem \ref{thm}, Lemma \ref{l:8}, Corollary \ref{eigs}) are new.

The support of an operator is the set of spins it acts non-trivially on. Let $\dist(A_1,A_2)$ be the distance between the supports of two operators $A_1,A_2$. In one-dimensional translation-invariant systems, the thermal state $\rho$ at any inverse temperature $\beta=O(1)$ has exponential decay of correlations \cite{Ara69, BCP22}
\begin{equation} \label{eq:edc}
|\tr(\rho A_1A_2)-\tr(\rho A_1)\tr(\rho A_2)|=\|A_1\|\|A_2\|O(e^{-\dist(A_1,A_2)/\xi}),
\end{equation}
where the correlation length $\xi$ is a constant that depends on $\beta$.

Let
\begin{equation}
p_j:=e^{-\beta E_j}\big/\sum_{j=1}^de^{-\beta E_j},\quad F_j:=E_j-E_\psi,\quad G_l:=H_l-e_\psi
\end{equation}
so that $F_1,F_2,\ldots,F_d$ are the eigenvalues of $G$.

\begin{lemma} [moments] \label{l:moment}
\begin{gather}
\sum_{j=1}^dp_jF_j=\tr(\rho G)=0,\label{eq:1}\\
\sum_{j=1}^dp_jF_j^2=\tr(\rho G^2)=N\tr(\rho Gg)=\Theta(N),\label{eq:2}\\
\sum_{j=1}^dp_jF_j^3=\tr(\rho G^3)=N\tr(\rho G^2g)=O(N),\label{eq:3}\\
\sum_{j=1}^dp_jF_j^4=\tr(\rho G^4)=3\tr^2(\rho G^2)+O(N)=3N^2\tr^2(\rho Gg)+O(N),\label{eq:4}\\
\sum_{j=1}^dp_jF_j^5=\tr(\rho G^5)=O(N^2),\label{eq:5}\\
\sum_{j=1}^dp_jF_j^6=\tr(\rho G^6)=O(N^3).\label{eq:6}
\end{gather}
\end{lemma}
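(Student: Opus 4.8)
The plan rests on two properties of $G=\sum_{l=0}^{N-1}G_l$: that $\rho$ commutes with $\mathbb T$ (being a function of the translation-invariant $H$), and that $\rho$ exhibits exponential clustering. Translation invariance alone delivers the equalities. Equation (\ref{eq:1}) is immediate from the definition (\ref{eq:it}) of $\beta$, since $\tr(\rho G)=\tr(\rho H)-E_\psi=0$. For (\ref{eq:2}) and (\ref{eq:3}) I would write $\tr(\rho G^n)=\sum_{l=0}^{N-1}\tr(\rho G^{n-1}G_l)$ and use $\mathbb T\rho=\rho\mathbb T$ together with $G_l=\mathbb T^{-l}G_0\mathbb T^l$ to see that every summand equals $\tr(\rho G^{n-1}G_0)$; as $G_0=h-e_\psi=g$, this gives $\tr(\rho G^n)=N\tr(\rho G^{n-1}g)$ for $n=2,3$.

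The scaling in $N$ is where the real work lies, and I would route it through the cumulants $\kappa_n$ of $G$ under $\rho$. The central claim is \emph{extensivity}: $\kappa_n=O(N)$ for every fixed $n$, with $\kappa_2=\Theta(N)$. Writing $\kappa_n=\sum_{l_1,\dots,l_n}\langle G_{l_1}\cdots G_{l_n}\rangle_c$ in terms of the connected correlators $\langle\cdots\rangle_c$, translation invariance lets me fix $l_1=0$ at the cost of an overall factor $N$, and exponential clustering of the connected correlators of a one-dimensional Gibbs state at $\beta=O(1)$ (a property of the same origin as, but stronger than, the two-point bound (\ref{eq:edc}), established in \cite{Ara69, BCP22}) forces the remaining sum over $l_2,\dots,l_n$ to converge to $O(1)$; hence $\kappa_n=O(N)$. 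The matching lower bound $\kappa_2=\Theta(N)$ is the positivity of the specific heat: $\kappa_2=\tr(\rho G^2)=-\partial_\beta\tr(\rho H)$ equals $N$ times an intensive quantity bounded away from zero whenever $E_\psi$ is away from the spectral edges, which is exactly the hypothesis guaranteeing $\beta=O(1)$.

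With extensivity in hand, the remaining estimates follow from the moment--cumulant relations, which simplify greatly because $\kappa_1=\tr(\rho G)=0$ kills every partition containing a singleton block. Concretely, $m_2=\kappa_2$, $m_3=\kappa_3$, $m_4=3\kappa_2^2+\kappa_4$, $m_5=10\kappa_3\kappa_2+\kappa_5$, and $m_6=15\kappa_2^3+15\kappa_2\kappa_4+10\kappa_3^2+\kappa_6$. Substituting $\kappa_2=\Theta(N)$ and $\kappa_{n\ge3}=O(N)$ yields (\ref{eq:2})--(\ref{eq:6}): the leading contribution to each even moment is the Gaussian (Wick) term, a sum over perfect matchings, so the combinatorial factor $3$ (matchings of four points) gives the leading term $3\tr^2(\rho G^2)=3N^2\tr^2(\rho Gg)$ of (\ref{eq:4}), while $15=5!!$ (matchings of six points) produces $15\kappa_2^3=\Theta(N^3)$ and hence the $O(N^3)$ of (\ref{eq:6}).

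The main obstacle I anticipate is the extensivity step: the two-point bound (\ref{eq:edc}) as literally stated does not by itself control the higher connected correlators, so one must invoke the full exponential-clustering property of one-dimensional Gibbs states from \cite{Ara69, BCP22}, and then carry out the partition counting carefully enough to certify that the only contributions at the stated leading order of each even moment are the pure pair partitions, with all mixed and higher-block partitions genuinely lower order in $N$.
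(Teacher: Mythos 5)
Your proposal is correct in substance but packages the argument differently from the paper, and the comparison is instructive. The paper proves the equalities the same way (translation invariance of $\rho$ and $G$), and for the scaling it works directly in position space: it expands $\tr(\rho G^n)=\sum_{l_1,\ldots,l_n}\tr(\rho G_{l_1}\cdots G_{l_n})$, classifies tuples by the isolation distance $D(l_1,\ldots,l_n)=\max_i\min_{j\neq i}D(l_i,l_j)$, and uses only the \emph{bipartite} bound (\ref{eq:edc}) together with $\tr(\rho G_l)=0$ to show that each summand is $O(e^{-D/\xi})$ (the most isolated zero-mean factor is split off from the product of the rest); counting tuples with $D=r$ then gives $O(N)$ for the third and fifth moments, an explicit pair-partition decomposition into the three sets $T_{r,i}$ reproduces your Wick factor $3$ for the fourth moment, and the sixth moment is simply quoted from Lemma 4.1 of \cite{Ans16}. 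Your route through cumulants is cleaner and more systematic --- the moment--cumulant identities you write ($m_4=3\kappa_2^2+\kappa_4$, $m_5=10\kappa_2\kappa_3+\kappa_5$, $m_6=15\kappa_2^3+15\kappa_2\kappa_4+10\kappa_3^2+\kappa_6$, with singleton blocks killed by $\kappa_1=0$) are correct and immediately yield (\ref{eq:2})--(\ref{eq:6}) once $\kappa_2=\Theta(N)$ and $\kappa_{n\ge3}=O(N)$ are known --- but, as you rightly flag, it requires extensivity of the higher cumulants, i.e.\ decay of connected $n$-point functions, which is strictly more than (\ref{eq:edc}) as stated. That input is available for one-dimensional Gibbs states at $\beta=O(1)$ (e.g.\ via Araki's analyticity of the free energy, since $\kappa_n=(-1)^n\partial_\beta^n\log Z=O(N)$), so your proof closes, but the paper's argument is deliberately arranged to need only the two-point bound plus $\tr(\rho G_l)=0$. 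On the lower bound $\kappa_2=\Omega(N)$ both you and the paper argue at the same physical level (nonvanishing intensive specific heat away from the spectral edges), so no gap there relative to the paper's own standard of rigor.
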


\begin{proof} [Proof of Eq.~(\ref{eq:2})]
The second equality is due to the translational invariance of $G$ and $\rho$. $-\tr(\rho G^2)$ is the heat capacity with respect to the inverse temperature $\beta$. Exponential decay of correlations (\ref{eq:edc}) implies that it is extensive.
\end{proof}

\begin{proof} [Proof of Eqs.~(\ref{eq:3}), (\ref{eq:5}), (\ref{eq:6})]
Equation (\ref{eq:6}) is a special case of Lemma 4.1 in Ref.~\cite{Ans16}. To prove Eq. (\ref{eq:3}), we improve the proof of Lemma 4.1 in Ref.~\cite{Ans16}. Let
\begin{equation}
D(l_1,l_2):=\min\{|l_1-l_2|,N-|l_1-l_2|\}
\end{equation}
be the distance between $G_{l_1}$ and $G_{l_2}$. For any tuple $(l_1,l_2,\ldots,l_n)$, let
\begin{equation}
D(l_1,l_2,\ldots,l_n):=\max_{i\in\{1,2,\ldots,n\}}\min_{j\neq i}D(l_i,l_j)
\end{equation}
so that
\begin{equation}
|\{(l_1,l_2,l_3)\in\{0,1,\ldots,N-1\}^3:D(l_1,l_2,l_3)=r\}|=NO(r+1).
\end{equation}
Since $\tr(\rho G_j)=0$ for all $j$, exponential decay of correlations (\ref{eq:edc}) implies that
\begin{equation}
\tr(\rho G_{l_1}G_{l_2}G_{l_3})=O(e^{-D(l_1,l_2,l_3)/\xi}).
\end{equation}
Therefore,
\begin{multline}
\tr(\rho G^3)=\sum_{l_1,l_2,l_3=0}^{N-1}\tr(\rho G_{l_1}G_{l_2}G_{l_3})=\sum_{r\ge0}\sum_{l_1,l_2,l_3:D(l_1,l_2,l_3)=r}\tr(\rho G_{l_1}G_{l_2}G_{l_3})\\
\le\sum_{r\ge0}NO(r+1)e^{-r/\xi}=O(N).
\end{multline}
Equation (\ref{eq:5}) can be proved in the same way as Eq.~(\ref{eq:3}). 
\end{proof}

\begin{proof} [Proof of Eq.~(\ref{eq:4})]
Let
\begin{gather}
    T_r:=\{(l_1,l_2,l_3,l_4)\in\{0,1,\ldots,N-1\}^4:D(l_1,l_2,l_3,l_4)=r\},\\
    T'_{r,1}:=\big\{(l_1,l_2,l_3,l_4)\in\{0,1,\ldots,N-1\}^4:\max\{D(l_1,l_2),D(l_3,l_4)\}=r\big\},\\
    T'_{r,2}:=\big\{(l_1,l_2,l_3,l_4)\in\{0,1,\ldots,N-1\}^4:\max\{D(l_1,l_3),D(l_2,l_4)\}=r\big\},\\
    T'_{r,3}:=\big\{(l_1,l_2,l_3,l_4)\in\{0,1,\ldots,N-1\}^4:\max\{D(l_1,l_4),D(l_2,l_3)\}=r\big\},\\
    T_{r,i}=T_r\cap T'_{r,i},\quad i=1,2,3
\end{gather}
so that
\begin{equation}
T_r=T_{r,1}\cup T_{r,2}\cup T_{r,3},\quad|T_{r,1}\cap T_{r,2}|=NO(r+1),\quad|T'_{r,i}\setminus T_{r,i}|=NO(r^2).
\end{equation}
Exponential decay of correlations (\ref{eq:edc}) implies that
\begin{equation}
\tr(\rho G_{l_1}G_{l_2}G_{l_3}G_{l_4})=O(e^{-D(l_1,l_2,l_3,l_4)/\xi}),\quad\tr(\rho G_{l_1}G_{l_2})=O(e^{-D(l_1,l_2)/\xi}).
\end{equation}
Therefore,
\begin{multline}
\tr(\rho G^4)=\sum_{l_1,l_2,l_3,l_4=0}^{N-1}\tr(\rho G_{l_1}G_{l_2}G_{l_3}G_{l_4})=\sum_{r\ge0}\sum_{(l_1,l_2,l_3,l_4)\in T_r}\tr(\rho G_{l_1}G_{l_2}G_{l_3}G_{l_4})\\
\approx\sum_{i=1}^3\sum_{r\ge0}\sum_{(l_1,l_2,l_3,l_4)\in T_{r,i}}\tr(\rho G_{l_1}G_{l_2}G_{l_3}G_{l_4}),
\end{multline}
where the approximation error is upper bounded by
\begin{equation}
\sum_{r\ge0}NO(r+1)e^{-r/\xi}=O(N).
\end{equation}
Since $\min\{\tr(\rho G_{l_1}G_{l_2}),\tr(\rho G_{l_3}G_{l_4})\}=O(e^{-r/\xi})$ for $(l_1,l_2,l_3,l_4)\in T'_{r,1}$,
\begin{multline}
\tr^2(\rho G^2)=\sum_{r\ge0}\sum_{(l_1,l_2,l_3,l_4)\in T'_{r,1}}\tr(\rho G_{l_1}G_{l_2})\tr(\rho G_{l_3}G_{l_4})\\
\approx\sum_{r\ge0}\sum_{(l_1,l_2,l_3,l_4)\in T_{r,1}}\tr(\rho G_{l_1}G_{l_2})\tr(\rho G_{l_3}G_{l_4}),
\end{multline}
where the approximation error is upper bounded by
\begin{equation}
\sum_{r\ge0}NO(r^2)e^{-r/\xi}=O(N).
\end{equation}
Since
\begin{align}
&\sum_{\substack{(l_1,l_2,l_3,l_4)\in T_{r,1}\\\dist(G_{l_1}G_{l_2},G_{l_3}G_{l_4})\le r}}|\tr(\rho G_{l_1}G_{l_2}G_{l_3}G_{l_4})-\tr(\rho G_{l_1}G_{l_2})\tr(\rho G_{l_3}G_{l_4})|=NO(r^2+1)e^{-r/\xi},\\
&\sum_{\substack{(l_1,l_2,l_3,l_4)\in T_{r,1}\\\dist(G_{l_1}G_{l_2},G_{l_3}G_{l_4})>r}}|\tr(\rho G_{l_1}G_{l_2}G_{l_3}G_{l_4})-\tr(\rho G_{l_1}G_{l_2})\tr(\rho G_{l_3}G_{l_4})|\nonumber\\
&=\sum_{r'>r}\sum_{\substack{(l_1,l_2,l_3,l_4)\in T_{r,1}\\\dist(G_{l_1}G_{l_2},G_{l_3}G_{l_4})=r'}}O(e^{-r'/\xi})=\sum_{r'>r}NO(r+1)e^{-r'/\xi}=NO(r+1)e^{-r/\xi},
\end{align}
we obtain
\begin{equation}
\sum_{r\ge0}\sum_{(l_1,l_2,l_3,l_4)\in T_{r,1}}|\tr(\rho G_{l_1}G_{l_2}G_{l_3}G_{l_4})-\tr(\rho G_{l_1}G_{l_2})\tr(\rho G_{l_3}G_{l_4})|=O(N).
\end{equation}
Similarly,
\begin{equation}
\sum_{r\ge0}\sum_{(l_1,l_2,l_3,l_4)\in T_{r,2}}|\tr(\rho G_{l_1}G_{l_2}G_{l_3}G_{l_4})-\tr(\rho G_{l_1}G_{l_3})\tr(\rho G_{l_2}G_{l_4})|=O(N).
\end{equation}
We complete the proof by combining the equations above.
\end{proof}

If we measure the energy of the thermal state $\rho$, the measurement results are concentrated.
\begin{lemma} [\cite{Ans16}] \label{Mar}
For any $\epsilon>0$,
\begin{equation}
\sum_{j:|F_j|\ge N\epsilon}p_j=O(e^{-\Omega(\epsilon\sqrt N)}).
\end{equation}
\end{lemma}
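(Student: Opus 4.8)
The plan is to read the left-hand side as a classical large-deviation probability and bound it by the moment method. Because $\rho\propto e^{-\beta H}$ commutes with $G=H-E_\psi$, the weights $p_j$ are exactly the Gibbs probabilities of the eigenvalues $F_j$ of $G$, so $\sum_{j:|F_j|\ge N\epsilon}p_j$ is the probability that a measurement of $G$ in the state $\rho$ has magnitude at least $N\epsilon$. I would control this tail by Markov's inequality applied to a high even moment, which reduces the problem to bounding $\tr(\rho G^{2m})$ for all $m$, i.e.\ to extending Lemma~\ref{l:moment} (which treats the cases $m\le3$) to arbitrary order.

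The estimate I would aim for is the sub-exponential moment bound $\tr(\rho G^{2m})=O\big((2m)!\,(C\sqrt N)^{2m}\big)$ for a constant $C=C(\beta,\xi)$, which places the concentration scale at the standard deviation $\sqrt{\tr(\rho G^2)}=\Theta(\sqrt N)$ from (\ref{eq:2}). To prove it I would generalize the cluster-counting used for Eqs.~(\ref{eq:3}), (\ref{eq:5}), (\ref{eq:6}): expand $\tr(\rho G^{2m})=\sum_{l_1,\dots,l_{2m}}\tr(\rho G_{l_1}\cdots G_{l_{2m}})$, organize the index tuples by the partition of the $2m$ sites into connected clusters under the distance $D$, bound the connected correlator of each cluster of size $s$ using exponential decay of correlations (\ref{eq:edc}) together with $\tr(\rho G_l)=0$, and count the tuples realizing a partition into $q$ clusters as $N^{q}$ times a geometric factor that is summable against $e^{-r/\xi}$. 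The dominant term is the fully paired one, $(2m-1)!!\,\tr^{m}(\rho G^2)=\Theta(N^{m})$, exactly as in the computation of $\tr(\rho G^4)$ behind (\ref{eq:4}); the remaining cluster structures are of lower order in $N$ and get absorbed into the factorial.

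Given the moment bound, I would finish with Markov and an optimization over $m$: $\sum_{j:|F_j|\ge N\epsilon}p_j\le\tr(\rho G^{2m})/(N\epsilon)^{2m}\le(2m)!\,\big(C/(\epsilon\sqrt N)\big)^{2m}$, and choosing $2m=\Theta(\epsilon\sqrt N)$ gives the claimed $O(e^{-\Omega(\epsilon\sqrt N)})$.

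The main obstacle is uniform combinatorial control of the cluster expansion to all orders: I must show that the number of index tuples realizing each partition, weighted by the decaying connected correlators, grows no faster than the stated factorial rate as $m$ increases, not merely at the fixed low orders handled in Lemma~\ref{l:moment}. This all-orders bookkeeping is precisely the content of the concentration theorem of Ref.~\cite{Ans16}, so the cleanest route is to invoke it directly. I note that the commutativity $[\rho,G]=0$ in fact permits a sharper argument---writing the moment generating function as the free-energy ratio $\tr(\rho e^{sG})=e^{-sE_\psi}\tr(e^{-(\beta-s)H})/\tr(e^{-\beta H})$ and using extensivity and the finite-$\beta$ analyticity of the one-dimensional free energy to obtain a sub-Gaussian Chernoff bound---but the weaker sub-exponential bound stated here is all that is needed downstream.
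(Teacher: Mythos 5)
Your proposal matches the paper's treatment: the paper gives no proof of Lemma \ref{Mar} beyond citing Ref.~\cite{Ans16}, and the route you sketch---bounding $\tr(\rho G^{2m})$ by a cluster expansion built on exponential decay of correlations (\ref{eq:edc}) with the fully paired term $(2m-1)!!\,\Theta(N^m)$ dominating, then applying Markov's inequality to a high even moment and optimizing $2m=\Theta(\epsilon\sqrt N)$---is exactly the argument of that reference, of which Lemma \ref{l:moment} is the low-order specialization. You correctly identify that the only real work is the uniform-in-$m$ combinatorial control, and that invoking \cite{Ans16} directly is the intended resolution. Your closing observation is also right and worth one caveat: for the Gibbs weights $p_j$ the identity $\tr(\rho e^{sG})=e^{-sE_\psi}\tr(e^{-(\beta-s)H})/\tr(e^{-\beta H})$ together with the analyticity of the one-dimensional free energy at $\beta=O(1)$ does give a sharper sub-Gaussian tail $e^{-\Omega(N\epsilon^2)}$ with far less bookkeeping; but the paper later reuses Lemma \ref{Mar} verbatim with $p_j$ replaced by the diagonal-ensemble weights $p'_j$ of $|\psi(0)\rangle$, where no commuting-Gibbs structure is available and only the moment-method argument from exponential decay of correlations survives, so the weaker statement is the one that has to be proved anyway.
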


This lemma allows us to upper bound the total contribution of all eigenstates away from $E_\psi$. Let $C=O(1)$ be a sufficiently large constant such that
\begin{equation} \label{tail}
\sum_{j:|F_j|\ge\Lambda}p_j|F_j|^m\le q,\quad\Lambda:=C\sqrt{N}\log N,\quad q:=1/\poly(N)
\end{equation}
for $m=0,1,2$, where $\poly(N)$ denotes a polynomial of sufficiently high degree in $N$.

For notational simplicity, let $x\overset{\delta}=y$ denote $|x-y|\le\delta$.

\begin{lemma} \label{l:8}
For any local operator $A$ with $\|A\|=O(1)$, Assumption \ref{ass} implies that
\begin{gather}
f_N(e_\psi)=\tr(\rho A)+\frac{\tr(\rho G^2g)\tr(\rho GA)}{2N\tr^2(\rho Gg)}+\frac{\tr(\rho G^2(\tr(\rho A)-A))}{2N\tr(\rho Gg)}+O(1/N^2),\label{eq:b1}\\
f_N'(e_\psi)=\tr(\rho GA)/\tr(\rho Gg)+O(1/N),\label{eq:b2}\\
f_N''(e_\psi)=\frac{\tr(\rho G^2(A-\tr(\rho A)))}{\tr^2(\rho Gg)}-\frac{\tr(\rho G^2g)\tr(\rho GA)}{\tr^3(\rho Gg)}+O(1/N).
\end{gather}
\end{lemma}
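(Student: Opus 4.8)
The plan is to determine the Taylor coefficients $f_N(e_\psi),f_N'(e_\psi),f_N''(e_\psi)$ (together with $f_N'''(e_\psi)$, which will enter only the error terms) by computing the three thermal moments $\tr(\rho A)$, $\tr(\rho GA)$, $\tr(\rho G^2A)$ and then inverting the resulting linear system. The starting point is the exact identity $\tr(\rho G^kA)=\sum_{j}p_jF_j^k\langle j|A|j\rangle$, which holds because $\rho=\sum_jp_j|j\rangle\langle j|$ is diagonal in the energy eigenbasis and $\langle j|G^k=F_j^k\langle j|$. Writing $\sigma:=\tr(\rho Gg)=\Theta(1)$ and $\tau:=\tr(\rho G^2g)=O(1)$, Lemma \ref{l:moment} supplies $\sum_jp_jF_j^2=N\sigma$, $\sum_jp_jF_j^3=N\tau$, $\sum_jp_jF_j^4=3N^2\sigma^2+O(N)$, and the bounds $\sum_jp_jF_j^5=O(N^2)$, $\sum_jp_jF_j^6=O(N^3)$ needed to control remainders.

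Next I would insert Assumption \ref{ass} in the form $\langle j|A|j\rangle=f_N(E_j/N)+r_j$ with $|r_j|\le1/\poly(N)$ on the window $|F_j|\le N\epsilon$, and split each sum $\sum_jp_jF_j^k\langle j|A|j\rangle$ at $|F_j|=\Lambda=C\sqrt N\log N$. The tail $|F_j|>\Lambda$ is bounded using $|\langle j|A|j\rangle|\le\|A\|=O(1)$ together with (\ref{tail}); the one subtlety is that this naively requires $\sum_{|F_j|>\Lambda}p_j|F_j|^n$ for $n$ up to $5$, whereas (\ref{tail}) is stated only for $m\le2$. I would close this gap with the elementary estimate $|F_j|^n\le\|G\|^{n-2}|F_j|^2=O(N^{n-2})|F_j|^2$, valid because $|F_j|\le\|G\|=O(N)$; this reduces every higher tail moment to the $m=2$ case, so all tail contributions remain $1/\poly(N)$ once the degree of $q$ is taken large enough. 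The $r_j$-contributions are disposed of the same way, since $\sum_{|F_j|\le\Lambda}p_j|F_j|^k|r_j|\le(1/\poly(N))\sum_jp_j|F_j|^k=1/\poly(N)$ for $k\le2$.

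On the bulk $|F_j|\le\Lambda$ one has $|F_j/N|\le C\log N/\sqrt N\le\epsilon$ for large $N$, so the expansion (\ref{eq:t}) applies. I would substitute the third-order Taylor expansion of $f_N(e_\psi+F_j/N)$ with remainder $O((F_j/N)^4)$, replace each truncated moment $\sum_{|F_j|\le\Lambda}p_jF_j^m$ by the full moment of Lemma \ref{l:moment} (the truncation error being a tail moment, hence $1/\poly(N)$), and bound the remainder by $O(N^{-4})\sum_jp_j|F_j|^{k+4}$, which is $O(1/N^2),O(1/N^2),O(1/N)$ for $k=0,1,2$ by (\ref{eq:4}),(\ref{eq:5}),(\ref{eq:6}). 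This yields
\begin{align}
\tr(\rho A)&=f_N(e_\psi)+\frac{\sigma f_N''(e_\psi)}{2N}+O(1/N^2),\\
\tr(\rho GA)&=\sigma f_N'(e_\psi)+\frac{\tau f_N''(e_\psi)+\sigma^2f_N'''(e_\psi)}{2N}+O(1/N^2),\\
\tr(\rho G^2A)&=N\sigma f_N(e_\psi)+\tau f_N'(e_\psi)+\frac{3\sigma^2f_N''(e_\psi)}{2}+O(1/N),
\end{align}
where all coefficients through $f_N'''(e_\psi)$ are $O(1)$ under Assumption \ref{ass}, so the indicated error orders hold and $f_N'''(e_\psi)$ never has to be solved for.

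Finally I would invert this system. The second relation gives $f_N'(e_\psi)=\tr(\rho GA)/\sigma+O(1/N)$, which is (\ref{eq:b2}). Eliminating $f_N(e_\psi)$ between the first and third relations and inserting this $f_N'(e_\psi)$ gives $\sigma^2f_N''(e_\psi)=\tr(\rho G^2A)-N\sigma\tr(\rho A)-\tau\tr(\rho GA)/\sigma+O(1/N)$; using $N\sigma=\tr(\rho G^2)$ from (\ref{eq:2}) to rewrite $N\sigma\tr(\rho A)=\tr(\rho G^2)\tr(\rho A)$ then produces the stated formula for $f_N''(e_\psi)$. Back-substituting this $O(1)$ value into the first relation yields $f_N(e_\psi)$ through order $O(1/N^2)$, i.e.\ (\ref{eq:b1}). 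I expect the main obstacle to be the error bookkeeping in this last step: $\tr(\rho G^2A)$ and $N\sigma\tr(\rho A)$ are each $\Theta(N)$, so their difference must be resolved to relative order $1/N$, which forces me to carry the $O(1/N)$ term of the third relation and the $O(1/N)$ error of $f_N'(e_\psi)$ explicitly and verify they combine so as to leave $f_N''(e_\psi)$ correct at leading order and $f_N(e_\psi)$ correct through $O(1/N)$.
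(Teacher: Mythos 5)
Your proposal is correct and follows essentially the same route as the paper: expand $\tr(\rho A)$, $\tr(\rho GA)$, $\tr(\rho G^2A)$ via the cutoff at $\Lambda$, the ETH, the Taylor expansion (\ref{eq:t}), and the moments of Lemma \ref{l:moment}, then invert the resulting linear system. Your explicit treatment of the tail moments with $m>2$ via $|F_j|^n\le\|G\|^{n-2}|F_j|^2$ is a reasonable way to make precise a point the paper leaves implicit, but it does not change the argument.
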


\begin{proof}
We have
\begin{align} \label{eq:27}
&\tr(\rho A)=\sum_{j=1}^dp_j\langle j|A|j\rangle\overset{O(q)}=\sum_{j:|F_j|<\Lambda}p_j\langle j|A|j\rangle\overset{1/\poly(N)}=\sum_{j:|F_j|<\Lambda}p_jf_N(E_j/N)\nonumber\\
&=\sum_{j:|F_j|<\Lambda}p_j\left(f_N(e_\psi)+\frac{f_N'(e_\psi)F_j}N+\frac{f_N''(e_\psi)F_j^2}{2N^2}+\frac{f_N'''(e_\psi)F_j^3}{6N^3}+O(F_j^4/N^4)\right)\nonumber\\
&\overset{O(q)}=\sum_{j=1}^dp_j\left(f_N(e_\psi)+\frac{f_N'(e_\psi)F_j}N+\frac{f_N''(e_\psi)F_j^2}{2N^2}+\frac{f_N'''(e_\psi)F_j^3}{6N^3}+O(F_j^4/N^4)\right)\nonumber\\
&=f_N(e_\psi)+\frac{f_N''(e_\psi)\tr(\rho Gg)}{2N}+\frac{f_N'''(e_\psi)\tr(\rho G^2g)}{6N^2}+O(1/N^2)\nonumber\\
&=f_N(e_\psi)+\frac{f_N''(e_\psi)\tr(\rho Gg)}{2N}+O(1/N^2),
\end{align}
where we used inequality (\ref{tail}), the ETH (\ref{eq:ass}), and the Taylor expansion
\begin{equation} \label{eq:tayloreth}
f_N(E_j/N)=f_N(e_\psi)+f_N'(e_\psi)F_j/N+f_N''(e_\psi)F_j^2/(2N^2)+f_N'''(e_\psi)F_j^3/(6N^3)+O(F_j^4/N^4)
\end{equation}
in the steps marked with ``$O(q)$,'' ``$1/\poly(N)$,'' and from the first to the second line, respectively. Similarly,
\begin{align} \label{derivc}
&\tr(\rho GA)=\sum_{j=1}^dp_jF_j\langle j|A|j\rangle\overset{O(q)}=\sum_{j:|F_j|<\Lambda}p_jF_j\langle j|A|j\rangle\overset{1/\poly(N)}=\sum_{j:|F_j|<\Lambda}p_jF_jf_N(E_j/N)\nonumber\\
&=\sum_{j:|F_j|<\Lambda}p_j\left(f_N(e_\psi)F_j+\frac{f_N'(e_\psi)F_j^2}N+\frac{f_N''(e_\psi)F_j^3}{2N^2}+O(F_j^4/N^3)\right)\nonumber\\
&\overset{O(q)}=\sum_{j=1}^dp_j\left(f_N(e_\psi)F_j+\frac{f_N'(e_\psi)F_j^2}N+\frac{f_N''(e_\psi)F_j^3}{2N^2}+O(F_j^4/N^3)\right)\nonumber\\
&=f'(e_\psi)\tr(\rho Gg)+\frac{f''(e_\psi)\tr(\rho G^2g)}{2N}+O(1/N)=f'(e_\psi)\tr(\rho Gg)+O(1/N).
\end{align}
Thus, we obtain Eq. (\ref{eq:b2}). Furthermore,
\begin{align} \label{2derivc}
&\tr(\rho G^2A)=\sum_{j=1}^dp_jF_j^2\langle j|A|j\rangle\overset{O(q)}=\sum_{j:|F_j|<\Lambda}p_jF_j^2\langle j|A|j\rangle\overset{1/\poly(N)}=\sum_{j:|F_j|<\Lambda}p_jF_j^2f_N(E_j/N)\nonumber\\
&=\sum_{j:|F_j|<\Lambda}p_j\left(f_N(e_\psi)F_j^2+\frac{f_N'(e_\psi)F_j^3}N+\frac{f_N''(e_\psi)F_j^4}{2N^2}+\frac{f_N'''(e_\psi)F_j^5}{6N^3}+O(F_j^6/N^4)\right)\nonumber\\
&\overset{O(q)}=\sum_{j=1}^dp_j\left(f_N(e_\psi)F_j^2+\frac{f_N'(e_\psi)F_j^3}N+\frac{f_N''(e_\psi)F_j^4}{2N^2}+\frac{f_N'''(e_\psi)F_j^5}{6N^3}+O(F_j^6/N^4)\right)\nonumber\\
&=f_N(e_\psi)\tr(\rho G^2)+f_N'(e_\psi)\tr(\rho G^2g)+\frac{f_N''(e_\psi)\tr(\rho G^4)}{2N^2}+\frac{f_N'''(e_\psi)\tr(\rho G^5)}{6N^3}+O(1/N)\nonumber\\
&=f_N(e_\psi)N\tr(\rho Gg)+\frac{\tr(\rho G^2g)\tr(\rho GA)}{\tr(\rho Gg)}+\frac{3f_N''(e_\psi)\tr^2(\rho Gg)}{2}+O(1/N).
\end{align}
We complete the proof of the lemma by solving (\ref{eq:27}) and (\ref{2derivc}).
\end{proof}

Equation (\ref{eq:b1}) shows the difference between the eigenstate and thermal expectation values.

\begin{corollary} \label{eigs}
Let $|j\rangle$ be an eigenstate whose energy $E_j$ is (very close to) zero. For any traceless local operator $A$ with $\|A\|=1$, Assumption \ref{ass} implies that
\begin{equation} \label{eq:eigs}
\langle j|A|j\rangle=\frac{\tr(H^2h)\tr(HA)-\tr(Hh)\tr(H^2A)}{2N\tr^2(Hh)}+O(1/N^2).
\end{equation}
\end{corollary}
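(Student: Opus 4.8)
The plan is to recognize Corollary \ref{eigs} as the infinite-temperature ($\beta=0$) specialization of Eq.~(\ref{eq:b1}) in Lemma \ref{l:8}. When $E_\psi=0$, the inverse temperature determined by (\ref{eq:it}) is $\beta=0$: since $\tr h=0$ forces $\tr H=0$, the choice $\beta=0$ indeed gives $\tr(\rho H)=0=E_\psi$. Thus the thermal state degenerates to the maximally mixed state $\rho=I/d$, and the shifted operators become $G=H-E_\psi=H$ and $g=h-e_\psi=h$. I would first check that the hypotheses of Lemma \ref{l:8} survive at this point: $\beta=0$ is a legitimate $O(1)$ value, energy density zero sits in the middle of the spectrum (far from the edges, so $\beta=O(1)$ is no problem), and the exponential decay of correlations (\ref{eq:edc}) holds trivially for $\rho=I/d$, since disjoint-support operators factorize exactly and the correlation length may be taken arbitrarily small. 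Hence the moment estimates of Lemma \ref{l:moment} and the conclusion (\ref{eq:b1}) remain in force.

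The second step is a direct substitution. Every thermal expectation collapses to a normalized trace, $\tr(\rho X)=\tr(X)/d$, and because $A$ is traceless we have $\tr(\rho A)=0$. This annihilates the leading term $\tr(\rho A)$ and reduces the numerator of the third term to $-\tr(\rho G^2A)=-\tr(H^2A)/d$. The factors of $d$ cancel in every ratio, so that
\begin{equation}
\frac{\tr(\rho G^2g)\tr(\rho GA)}{2N\tr^2(\rho Gg)}=\frac{\tr(H^2h)\tr(HA)}{2N\tr^2(Hh)},\qquad \frac{\tr(\rho G^2(\tr(\rho A)-A))}{2N\tr(\rho Gg)}=\frac{-\tr(H^2A)}{2N\tr(Hh)}.
\end{equation}
Combining the two fractions over the common denominator $2N\tr^2(Hh)$ produces exactly the right-hand side of (\ref{eq:eigs}), with the $O(1/N^2)$ remainder inherited from (\ref{eq:b1}). (One also notes $\tr(Hh)=\Theta(d)$ by Eq.~(\ref{eq:2}), so the denominator is safely nonzero.)

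Finally I would connect $f_N(0)$ to the eigenstate expectation value $\langle j|A|j\rangle$. By the ETH (\ref{eq:ass}), $\langle j|A|j\rangle=f_N(E_j/N)+O(1/\poly(N))$, and the Taylor expansion (\ref{eq:t}) gives $f_N(E_j/N)=f_N(0)+f_N'(0)E_j/N+O((E_j/N)^2)$. Because the density of states is exponentially large near the middle of the spectrum, the eigenstate closest to zero energy has $|E_j|$ exponentially small, so $f_N'(0)E_j/N$ (with $f_N'(0)=O(1)$ by Eq.~(\ref{eq:b2})) is negligible against $O(1/N^2)$; the ETH error $1/\poly(N)$ is likewise absorbed by taking the polynomial degree high enough. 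This yields $\langle j|A|j\rangle=f_N(0)+O(1/N^2)$ and completes the argument. I do not expect a genuine obstacle here, as the computation is routine once the reduction is made; the only points requiring care are the two just described --- verifying that Lemma \ref{l:8} remains applicable at the boundary value $\beta=0$, and justifying that the offset of $E_j$ from exactly zero contributes below the $O(1/N^2)$ threshold.
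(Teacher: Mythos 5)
Your proposal is correct and matches the paper's (implicit) argument: the corollary is obtained by specializing Eq.~(\ref{eq:b1}) to $E_\psi=E_j\approx0$, i.e., $\beta=0$, $\rho=I/d$, $G=H$, $g=h$, $\tr(\rho A)=0$, after which the factors of $d$ cancel and the two fractions combine over the common denominator $2N\tr^2(Hh)$. Your added care about the validity of Lemma~\ref{l:8} at $\beta=0$ and about the offset of $E_j$ from exactly zero is sound and consistent with what the paper leaves unstated.
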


There always exists \cite{Hua22ATMP} a traceless local operator $A$ such that the coefficient of the $1/N$ term on the right-hand side of Eq.~(\ref{eq:eigs}) is non-zero.

In the language of reduced density matrices, Eq.~(\ref{eq:b1}) can be stated as\footnote{Again I thank Soonwon Choi for pointing out that Eq.~(\ref{eq:b1}) can be stated as Eq.~(\ref{eq:eigd}).}
\begin{equation} \label{eq:eigd}
\left\|\rho_S+\frac{\tr(\rho G^2g)\tr_{\bar S}(\rho G)}{2N\tr^2(\rho Gg)}+\frac{\tr(\rho G^2)\rho_S-\tr_{\bar S}(\rho G^2)}{2N\tr(\rho Gg)}-\tr_{\bar S}|j\rangle\langle j|\right\|_1=O(1/N^2),
\end{equation}
where $\rho$ is the thermal state whose energy is $E_j$.

\begin{lemma}
Recall the definition (\ref{eq:pp}) of $p'_j$.
\begin{gather}
\sum_{j=1}^dp'_jF_j=\langle\psi(0)|G|\psi(0)\rangle=0,\quad\sum_{j=1}^dp'_jF_j^2=\langle\psi(0)|G^2|\psi(0)\rangle=O(N),\\
\sum_{j=1}^dp'_jF_j^3=\langle\psi(0)|G^3|\psi(0)\rangle=O(N),\quad\sum_{j=1}^dp'_jF_j^4=\langle\psi(0)|G^4|\psi(0)\rangle=O(N^2).
\end{gather}
\end{lemma}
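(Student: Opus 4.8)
The plan is to reduce this lemma to the already-established Lemma~\ref{l:moment}. The two equalities asserted in each line are immediate and carry no scaling content: expanding $|\psi(0)\rangle=\sum_j c_j|j\rangle$ in the eigenbasis and using $G^m|j\rangle=F_j^m|j\rangle$ gives $\sum_j p'_j F_j^m=\langle\psi(0)|G^m|\psi(0)\rangle$, while $\langle\psi(0)|G|\psi(0)\rangle=E_\psi-E_\psi=0$ holds by the definition (\ref{eq:it}) of $E_\psi$. So the real work is the scaling of the second, third, and fourth moments, which I would obtain by the same clustering-and-counting argument used in the proofs of Eqs.~(\ref{eq:2})--(\ref{eq:4}), now invoking the exponential decay of correlations of $|\psi(0)\rangle$ in place of (\ref{eq:edc}).

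The one genuine difference from the thermal case is that $|\psi(0)\rangle$ need not be translation invariant, so the single-term expectations $a_l:=\langle\psi(0)|G_l|\psi(0)\rangle$ need not vanish, whereas the proof of Lemma~\ref{l:moment} relied crucially on $\tr(\rho G_l)=0$. To restore this structure I would pass to the centered local terms $\tilde G_l:=G_l-a_l$. Because $\sum_{l=0}^{N-1}a_l=\langle\psi(0)|G|\psi(0)\rangle=0$, we still have $G=\sum_{l=0}^{N-1}\tilde G_l$, and now each $\tilde G_l$ is a local operator with $\|\tilde G_l\|=O(1)$ and $\langle\psi(0)|\tilde G_l|\psi(0)\rangle=0$. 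Two-point clustering of $|\psi(0)\rangle$ then gives $\langle\psi(0)|\tilde G_{l_1}\tilde G_{l_2}|\psi(0)\rangle=O(e^{-D(l_1,l_2)/\xi})$, which is exactly the property that $G_l$ enjoyed under $\rho$.

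With this substitution the three bounds follow essentially verbatim from the earlier proofs. For the variance, $\langle\psi(0)|G^2|\psi(0)\rangle=\sum_{l_1,l_2}\langle\psi(0)|\tilde G_{l_1}\tilde G_{l_2}|\psi(0)\rangle=\sum_{r\ge0}NO(e^{-r/\xi})=O(N)$; it is precisely the cancellation $\sum_l a_l=0$ that prevents the disconnected part $(\sum_l a_l)^2$ from contributing at order $N^2$. For the third moment I would reproduce the argument for Eq.~(\ref{eq:3}): whenever $D(l_1,l_2,l_3)=r$ one of the three terms is separated from the other two by distance at least $r$, and since far-separated operators commute I may move that term to the front, so clustering together with $\langle\psi(0)|\tilde G_l|\psi(0)\rangle=0$ yields $\langle\psi(0)|\tilde G_{l_1}\tilde G_{l_2}\tilde G_{l_3}|\psi(0)\rangle=O(e^{-r/\xi})$ and hence $\sum_{r\ge0}NO(r+1)e^{-r/\xi}=O(N)$. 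The fourth moment follows from the pairing analysis of Eq.~(\ref{eq:4}), giving $\langle\psi(0)|G^4|\psi(0)\rangle=3\langle\psi(0)|G^2|\psi(0)\rangle^2+O(N)=O(N^2)$.

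The main obstacle is conceptual rather than computational: recognizing that the loss of translation invariance merely shifts each local term by a constant, and that these shifts sum to zero, so the centered operators $\tilde G_l$ inherit exactly the mean-zero, exponentially clustering structure that drove the thermal estimates. Once this is in place the remaining work is the bookkeeping already carried out for Lemma~\ref{l:moment}; the only point requiring a little care is the middle-isolated configuration in the higher moments, handled by commuting the isolated far term past the others before applying two-point clustering.
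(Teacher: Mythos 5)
Your proposal is correct and follows essentially the same route as the paper: your centered operators $\tilde G_l = G_l - \langle\psi(0)|G_l|\psi(0)\rangle$ coincide exactly with the paper's $G'_l := H_l - \langle\psi(0)|H_l|\psi(0)\rangle$, and the paper likewise expands $G^2, G^3, G^4$ in these mean-zero local terms and invokes the exponential decay of correlations of $|\psi(0)\rangle$. The only difference is that you spell out the clustering-and-counting bookkeeping that the paper leaves implicit by reference to the proof of Lemma~\ref{l:moment}.
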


\begin{proof}
Let $G'_l:=H_l-\langle\psi(0)|H_l|\psi(0)\rangle$ so that
\begin{equation} \label{eq:g}
G=\sum_{l=0}^{N-1}G'_l,\quad\langle\psi(0)|G'_l|\psi(0)\rangle=0.
\end{equation}
Since $|\psi(0)\rangle$ has exponential decay of correlations, the lemma can be proved by expanding $G^2,G^3,G^4$ using Eq.~(\ref{eq:g}).
\end{proof}

Since $|\psi(0)\rangle$ has exponential decay of correlations, Lemma \ref{Mar} and Eq.~(\ref{tail}) remain valid upon replacing $p_j$ by $p'_j$.

\begin{proof} [Proof of Theorem \ref{thm}]
\begin{align}
&\tr(\psi^\infty A)=\sum_{j=1}^dp'_j\langle j|A|j\rangle\overset{O(q)}=\sum_{j:|F_j|<\Lambda}p'_j\langle j|A|j\rangle\overset{1/\poly(N)}=\sum_{j:|F_j|<\Lambda}p'_jf_N(E_j/N)\nonumber\\
&=\sum_{j:|F_j|<\Lambda}p'_j\left(f_N(e_\psi)+\frac{f_N'(e_\psi)F_j}N+\frac{f_N''(e_\psi)F_j^2}{2N^2}+\frac{f_N'''(e_\psi)F_j^3}{6N^3}+O(F_j^4/N^4)\right)\nonumber\\
&\overset{O(q)}=\sum_{j=1}^dp'_j\left(f_N(e_\psi)+\frac{f_N'(e_\psi)F_j}N+\frac{f_N''(e_\psi)F_j^2}{2N^2}+\frac{f_N'''(e_\psi)F_j^3}{6N^3}+O(F_j^4/N^4)\right)\nonumber\\
&=f_N(e_\psi)+\frac{f_N''(e_\psi)\langle\psi(0)|G^2|\psi(0)\rangle}{2N^2}+O(1/N^2).
\end{align}
We complete the proof by using Lemma \ref{l:8}.
\end{proof}

\section*{Notes}

Recently, I became aware of a related work \cite{CWX+24}. It studies a different problem, but there is some overlap between the intermediate technical aspects of their work and mine.

\section*{Acknowledgments}

I would like to thank Fernando G.S.L. Brand\~ao for pointing out the use of exponential decay of correlations in the finite-temperature case while collaborating on a related project \cite{HBZ19}; Yi Tan and Norman Y. Yao for discussions and collaboration on related work; Nicole Yunger Halpern for pointing out important references; Daniel K. Mark for discussions; Soonwon Choi for suggestions on improving the presentation and for his contribution acknowledged in the footnotes.

\printbibliography

\end{document}